\newcommand{\be}{\begin{equation}}
\newcommand{\ee}{\end{equation}}
\newcommand{\Tr}{\textrm{Tr}}
\newcommand{\defeq}{\mathrel{\mathop:}=}
\newtheorem{lem}{Lemma}
\newtheorem{thm}{Theorem}
\newtheorem{defi}{Definition}
\newtheorem{corollary}{Corollary}
\begin{document}
\title{Modulus of convexity for operator convex functions}
\author{Isaac H. Kim}
\affiliation{Perimeter Institute of Theoretical Physics, Waterloo ON N2L 2Y5, Canada}
\affiliation{Institute of Quantum Information and Matter, Pasadena CA 91125, USA}

\date{\today}
\begin{abstract}
Given an operator convex function $f(x)$, we obtain an operator-valued lower bound for $cf(x) + (1-c)f(y) - f(cx + (1-c)y)$, $c \in [0,1]$. The lower bound is expressed in terms of the matrix Bregman divergence. A similar inequality is shown to be false for functions that are convex but not operator convex.
\end{abstract}

\maketitle

\section{Introduction}

Formally, a function $f(x)$ is {\it operator convex} if it satisfies the following inequality for self-adjoint operators $A$ and $B$:
\begin{equation}
cf(A) + (1-c)f(B) - f(cA + (1-c)B) \geq 0, c\in [0,1]. \label{eq:operator_convexity}
\end{equation}
Most of the interesting examples deal with operators that are positive semi-definite. We shall follow the same convention in this paper.

Operator convex functions are known to satisfy a number of interesting properties. An important discovery was made by Hansen and Pederson, who used Eq.\ref{eq:operator_convexity} in order to obtain an operator generalization of the Jensen inequality.\cite{Hansen1982} Recently, Effros provided an elegant proof of the strong subadditivity of entropy (SSA)\cite{Effros2009} by (i) defining an operator generalization of the perspective function and (ii) using the aforementioned operator generalization of Jensen inequality. These results show that the fundamental inequality in quantum information theory - the strong subadditivity of entropy\cite{Lieb1972} - can be essentially derived from the operator convexity of a certain matrix-valued function.

An important open question in quantum information theory concerns the structure of states that are approximately conditionally independent, {\it i.e.,} the structure of states that  has a small yet nonzero quantum conditional mutual information.\footnote{Strong subadditivity asserts that quantum conditional mutual information is nonnegative.} The motivation comes from the fact that states that satisfy the equality condition of the SSA forms a quantum Markov chain.\cite{Hayden2004} One natural speculation along this line is to guess that a quantum state with a small conditional mutual information is close to some quantum Markov chain state. While this intuition is correct for classical states, its obvious quantum generalization is known to be false.\cite{Ibinson2007}

There are several ways to circumvent this issue. The predominant approach in the literature is to replace the set of quantum Markov chain states to a larger set of states, namely the separable states. The first result in this direction was obtained by Brand\~ao {\it et al.}\cite{Brandao2010} Their result was subsequently strengthened by Li and Winter.\cite{Li2012}

However, there is another possibility that is not necessarily precluded by the counterexamples of Ibinson {\it et al.}\cite{Ibinson2007} The quantum Markov chain property derived in Ref.\cite{Hayden2004} is a consequence of Petz's theorem\cite{Petz1988} and Koashi-Imoto theorem\cite{Koashi2002}. Petz's theorem asserts that, if a relative entropy between two quantum states does not decrease under a quantum channel, there exists a canonical recovery operation that can perfectly reverse the action of the channel. Koashi-Imoto theorem concerns the structure of states which are invariant under certain quantum channels. The quantum Markov chain property is derived in Ref.\cite{Hayden2004} by applying Koashi-Imoto theorem to Petz's canonical recovery operation.

Therefore, one may consider an alternative possibility: there might exist a canonical recovery operation analogous to Petz's recovery channel, whose performance is determined by the conditional mutual information. Such a result may not contradict the counterexamples of Ibinson {\it et al.}, since one cannot directly apply Koashi-Imoto theorem to such channels when the recovery operation is not perfect. Since Petz's theorem is based on the fundamental results about operator convex and operator monotone functions, a strengthening of these results may lead to new insights on the structure of states that have a small but nonzero amount of conditional mutual information. Also, obtaining such a strengthening might be interesting in its own right; it might be potentially useful in extending the preexisting results that are based on the properties of operator convex functions.

Motivated from these observations, we obtain a possible strengthening of Eq.\ref{eq:operator_convexity}. More precisely, the right hand side of Eq.\ref{eq:operator_convexity} shall be replaced by an operator-valued function that is always nonnegative. This operator, up to some constant that depends on $c$, is the {\it matrix Bregman divergence}.\cite{Petz2007} Matrix Bregman divergence is a natural matrix generalization of the classical Bregman divergence.\cite{Bregman1967} Given a convex function $f(x)$ and two probability distributions $p(x), q(x)$ over the same domain, Bregman divergence can be defined as
\begin{equation}
B_{f}(p \| q) := f(p) - f(q) - \lim_{c \to 0+} \frac{f(q+(p-q)c) - f(q)}{c}. \nonumber
\end{equation}
Petz noted that a similar treatment can be carried out even when $p$ and $q$ are promoted to operators, provided that the function $f$ is operator convex. The resulting matrix-valued divergence is the matrix Bregman divergence. Interestingly, the strengthening is only applicable to operator convex functions; the inequality is false for a convex function that is not operator convex.

As an application of this result, we prove an inequality that extends Pinsker's inequality. Recall that Pinsker's inequality asserts that the relative entropy $D(\rho \| \sigma) := \Tr(\rho (\log \rho - \log \sigma))$ between two normalized positive semidefinite operators, $\rho$ and $\sigma$, is lower bounded by their trace distance:
\begin{equation}
D(\rho \| \sigma) \geq \frac{1}{2}\|\rho - \sigma\|_1^2. \nonumber
\end{equation}
A simple corollary of our main result is the following inequality:
\begin{equation}
S(c\rho + (1-c)\sigma) - cS(\rho) - (1-c)S(\sigma) \geq \frac{1}{2}c(1-c)\|\rho- \sigma\|_1^2, \quad c\in [0,1], \nonumber
\end{equation}
where the underlying Hilbert space is finite-dimensional. 

The rest of the paper starts by describing the main result in Section \ref{section:convexity_BregmanDivergence}. We shall also show that the main result cannot be generalized to convex functions by providing a simple argument. Section \ref{section:AH_strengthening} describes the key technical result of this paper, which is a strengthening of the well-known Arithmetic-Harmonic inequality. Using this strengthening, we prove the main result in Section \ref{section:proof}.

\section{An inequality between the modulus of convexity and Bregman divergence for operator convex functions\label{section:convexity_BregmanDivergence}}
In order to describe the main result, we set the notations first. 
\begin{defi}
Modulus of convexity of a function $f(x)$ is
\begin{equation}
\mathcal{C}_f^{c}(A,B) \defeq c f(A) + (1-c)f(B) -f(c A + (1-c)B), \quad c\in [0,1]. \nonumber
\end{equation}
\end{defi}
Note that the modulus of convexity of an operator convex function is always nonnegative, as long as $A$ and $B$ are self-adjoint operators whose spectrum lie on the domain of $f$. The matrix Bregman divergence is nonnegative due to the same reason. Following Petz\cite{Petz2007}, we define the matrix Bregman divergence as follows.
\begin{defi}
Bregman divergence $\mathcal{D}_f(A,B)$ is
\begin{equation}
\mathcal{D}_f(A,B) = f(A) -f(B) - \lim_{t \to 0+} t^{-1}(f(B+t(A-B))- f(B)). \nonumber
\end{equation}
\end{defi}

The main result asserts that the Bregman divergence provides an operator-valued lower bound for the modulus of convexity.
\begin{thm}
For $A,B>0$, if $f(x)$ on $[0, \infty)$ is operator convex and $0<c < 1$
\begin{align}
\mathcal{C}_f^{c}(A,B) \geq &c(1-c) \frac{\mathcal{D}_f(M(1-c),M(c))}{(1-2c)^2 } &\quad c \neq \frac{1}{2} \nonumber \\
&  \frac{1}{8} \frac{d^2}{dx^2} f(M(\frac{1}{2}+x)) |_{x=0}. &\quad c = \frac{1}{2}, \label{eq:main_result}
\end{align}
where $M(c):=cA + (1-c)B$.
\label{thm:mainresult}
\end{thm}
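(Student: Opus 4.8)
The plan is to recast both sides using the one-parameter operator function $g(s)\defeq f(M(s))$, reduce the general operator convex case to the single building block $f(x)=x^{-1}$ via the integral representation of operator convex functions, and then recover $c=\tfrac12$ as a limit. First I would reformulate the statement. Because $M(s)=sA+(1-s)B$ is affine with $M(1)=A$ and $M(0)=B$, the chain rule gives $g'(c)=\lim_{t\to0^+}t^{-1}\big(f(M(c)+t(A-B))-f(M(c))\big)$, and since $M(1-c)-M(c)=(1-2c)(A-B)$ the two objects of interest are
\begin{equation}
\mathcal{C}_f^{c}(A,B)=cg(1)+(1-c)g(0)-g(c),\qquad
\mathcal{D}_f\big(M(1-c),M(c)\big)=g(1-c)-g(c)-(1-2c)g'(c).
\label{eq:g_reform}
\end{equation}
Thus for $c\neq\tfrac12$ the theorem is equivalent to
\begin{equation}
cg(1)+(1-c)g(0)-g(c)\ \ge\ \frac{c(1-c)}{(1-2c)^2}\Big(g(1-c)-g(c)-(1-2c)g'(c)\Big).
\label{eq:g_ineq}
\end{equation}
The endpoint $c=\tfrac12$ then follows by continuity: writing $r=1-2c$ and Taylor expanding $g$ about $s=\tfrac12$ gives $\mathcal{D}_f(M(1-c),M(c))=\tfrac12 r^2 g''(\tfrac12)+O(r^3)$, so the right-hand side of \eqref{eq:g_ineq} converges to $\tfrac18 g''(\tfrac12)=\tfrac18\,\frac{d^2}{dx^2}f(M(\tfrac12+x))|_{x=0}$ as $c\to\tfrac12$, while the left-hand side is continuous in $c$.

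Next I would use that \eqref{eq:g_ineq} is linear in $f$ and that both sides vanish when $f$ is affine. By the integral representation of operator convex functions on $[0,\infty)$, every such $f$ equals an affine term plus a positive combination of $x^2$ and the resolvents $(x+t)^{-1}$, $t>0$; by linearity and positivity of the representing measure it therefore suffices to verify \eqref{eq:g_ineq} for these building blocks. For $f(x)=x^2$ a direct computation gives $\mathcal{C}^{c}_{x^2}(A,B)=c(1-c)(A-B)^2$ and $\mathcal{D}_{x^2}(M(1-c),M(c))=(1-2c)^2(A-B)^2$, so both sides of \eqref{eq:g_ineq} coincide and equality holds. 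For $(x+t)^{-1}$ the substitution $A\mapsto A+t$, $B\mapsto B+t$ leaves $A-B$ unchanged and shifts $M(s)$ by $t$, reducing the inequality for $(x+t)^{-1}$ to that for $x^{-1}$, so the entire problem collapses to the single case $f(x)=x^{-1}$.

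The heart of the argument is therefore this case, which is precisely the promised strengthening of the arithmetic-harmonic mean inequality. Put $P\defeq M(c)>0$; applying the congruence $X\mapsto P^{1/2}XP^{1/2}$, which preserves operator order, I may assume $P=I$ and set $\tilde D\defeq P^{-1/2}(A-B)P^{-1/2}$. Then $g(s)=(I+(s-c)\tilde D)^{-1}$ and $g'(s)=-\tilde D(I+(s-c)\tilde D)^{-2}$ are both functions of the single self-adjoint operator $\tilde D$, hence simultaneously diagonalizable; the operator inequality \eqref{eq:g_ineq} thus splits into the scalar inequality, for each eigenvalue $d$ of $\tilde D$,
\begin{equation}
\frac{c}{1+(1-c)d}+\frac{1-c}{1-cd}-1\ \ge\ \frac{c(1-c)}{(1-2c)^2}\Big(\frac{1}{1+(1-2c)d}-1+(1-2c)d\Big),
\label{eq:scalar}
\end{equation}
where the conditions $A,B>0$ confine $d$ to $\big(-\tfrac{1}{1-c},\tfrac1c\big)$.

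The main obstacle is establishing the one-variable inequality \eqref{eq:scalar}. Both sides vanish through second order at $d=0$ (each grows like $c(1-c)d^2$), so the bound is tight and cannot be relaxed; I would prove \eqref{eq:scalar} by clearing its three denominators and checking that the resulting polynomial in $d$ is nonnegative on the admissible interval, which I expect to be the only genuinely laborious step. It is worth stressing why operator convexity is indispensable: the spectral reduction relies on $g$ being a resolvent, so that $g$ and $g'$ commute. For a merely convex $f$ the scalar profile $s\mapsto\bra{\psi}f(M(s))\ket{\psi}$ can be an arbitrary convex function, including one with a kink, and for such profiles the analogue of \eqref{eq:scalar} fails---which is exactly the mechanism behind the breakdown for convex functions that are not operator convex.
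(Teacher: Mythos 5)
Your proposal is correct, and it reaches the theorem by a genuinely different route from the paper. Both arguments begin identically: use the integral representation of operator convex functions to reduce everything to the resolvent building blocks $f(x)=(x+\lambda)^{-1}$, the affine and quadratic parts being trivial or exact. From there the paper proves a strengthened arithmetic--harmonic inequality \emph{only at the midpoint} $c=\tfrac12$ (its key lemma, Eq.~\ref{eq:StrongerAH}, obtained by the congruence $C=A^{-1/2}BA^{-1/2}$), then reaches general $c$ by the $2\times 2$ block dilation $T_1=WTW^{\dagger}$, $T_2=W^{\dagger}TW$, and finally identifies the resulting integral with the Bregman divergence. You instead observe that for a resolvent a single congruence by $M(c)^{-1/2}$ turns \emph{every} operator appearing in the general-$c$ inequality into a function of the one self-adjoint operator $\tilde D$, so the operator inequality collapses to a scalar inequality in each eigenvalue $d$; no dilation and no separate midpoint lemma are needed, and $c=\tfrac12$ is recovered as a limit rather than serving as the starting point. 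Your route is more elementary and treats all $c$ uniformly, at the price of working only for resolvents; the paper's route isolates a reusable operator lemma and an explicit integral formula for the gap. The one step you defer --- the one-variable inequality for each eigenvalue $d$ --- does close exactly as you predict, and is less laborious than you fear: the left-hand side simplifies to $c(1-c)d^2/\bigl[(1+(1-c)d)(1-cd)\bigr]$ and the right-hand side to $c(1-c)d^2/\bigl[1+(1-2c)d\bigr]$; since $(1+(1-c)d)(1-cd)=1+(1-2c)d-c(1-c)d^2\le 1+(1-2c)d$ and all three denominators are positive on $d\in\bigl(-\tfrac{1}{1-c},\tfrac{1}{c}\bigr)$, the difference of the two sides equals $c^2(1-c)^2d^4/\bigl[(1+(1-c)d)(1-cd)(1+(1-2c)d)\bigr]\ge 0$, which also makes the $O(d^4)$ tightness you noted explicit.
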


\subsection{Convex vs. operator convex functions}
A natural question is whether Theorem \ref{thm:mainresult} can be extended to operator convex functions of order $n$. We will give a simple argument that such an extension cannot exist for $n=1$. Recall that operator convex functions of order $1$ refer to all the convex functions. One can easily check that the function $g(x) = \frac{1}{2}x^2 - (1+ x) \log (1+x)$ is convex for $x>0$.

Our claim is that Eq.\ref{eq:main_result} cannot hold for such $g(x)$. If Eq.\ref{eq:main_result} holds for $f(x)=g(x)$, it implies that Eq.\ref{eq:main_result} holds for $f(x)= - (1+x)\log (1+x)$ as well; this follows from a simple observation that Eq.\ref{eq:main_result} holds  with an equality if $f(x)=x^2$. Since $f(x)= - (1+x) \log (1+x)$ as well as $f(x)=(1+x)\log(1+x)$ satisfies Eq.\ref{eq:main_result}, one can conclude that the inequalities in Eq.\ref{eq:main_result} must be satisfied with an equality for such function. Clearly, this is not the case, and we arrive at a contradiction. Therefore, Eq.\ref{eq:main_result} cannot be extended to operator convex functions of order $1$.

\section{Strengthening of the Arithmetic-Harmonic inequality\label{section:AH_strengthening}}

In this section, we prove a strengthening of the well-known Arithmetic-Harmonic(AH) inequality. AH inequality states that
\begin{equation}
\frac{1}{A} + \frac{1}{B} \geq \frac{4}{A+B} \label{eq:AH}
\end{equation}
for positive definite matrices $A$ and $B$. It is well known that any operator convex function has a unique integral representation that can utilize Eq.\ref{eq:AH}. For example, the following theorem was recently proved by Hiai {\it et al.}\cite{Hiai2010}
\begin{thm}
\cite{Hiai2010} A continuous real function $f$ on $[0,\infty)$ is operator convex iff there exists a real number $a$, a nonnegative number $b$, and a nonnegative measure $\mu$ on $[0,\infty)$, satisfying
\begin{equation}
\int_0^{\infty} \frac{1}{(1+\lambda)^2}d\mu(\lambda) < \infty  \nonumber,
\end{equation}
such that
\begin{equation}
f(x) = f(0)+ ax + bx^2 + \int_0^{\infty}(\frac{x}{1+\lambda} - 1+ \frac{\lambda}{x+\lambda}) d\mu(\lambda), x\in [0,\infty).
\end{equation}
Moreover, the numbers $a,b$, and the measure $\mu$ is uniquely determined by $f$.
\end{thm}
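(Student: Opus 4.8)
The plan is to prove the two implications separately, treating sufficiency as routine and concentrating effort on producing the representation. For the easy ``if'' direction I would show that any $f$ of the stated form is operator convex: the terms $f(0)+ax$ are affine and so contribute nothing, $bx^2$ is operator convex because $x\mapsto x^2$ is and $b\geq 0$, and for each fixed $\lambda\geq 0$ the integrand $\frac{x}{1+\lambda}-1+\frac{\lambda}{x+\lambda}$ is operator convex since it differs by an affine function from $\lambda\,(x+\lambda)^{-1}$, and $x\mapsto(x+\lambda)^{-1}$ is operator convex on $[0,\infty)$ --- which is precisely the content of the AH inequality Eq.\ref{eq:AH}. Since operator convexity is preserved under integration against the nonnegative measure $\mu$ (the moment condition guaranteeing convergence), the integral term is operator convex and the sufficiency follows.

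For the harder ``only if'' direction I would reduce operator convexity to operator monotonicity. After subtracting the constant $f(0)$ (which changes neither the operator convexity of $f$ nor the shape of the claimed representation), set $g(x):=(f(x)-f(0))/x$. I would invoke the standard characterization that an operator convex function on $[0,\infty)$ whose value at the origin is $\leq 0$ has $g$ operator monotone on $(0,\infty)$; since the subtracted function vanishes at $0$, this applies. Löwner's theorem then supplies an integral representation
\be
g(x)=\alpha+\beta x+\int_0^{\infty}\frac{x}{x+\lambda}\,d\tilde\nu(\lambda),\quad \beta\geq 0,\ \int_0^{\infty}\frac{d\tilde\nu(\lambda)}{1+\lambda}<\infty. \nonumber
\ee

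The remaining work is purely algebraic. Multiplying by $x$ gives $f(x)-f(0)=\alpha x+\beta x^2+\int_0^{\infty}\frac{x^2}{x+\lambda}\,d\tilde\nu(\lambda)$, and I would apply the identity
\be
\frac{x^2}{x+\lambda}=\lambda\Big(\frac{x}{1+\lambda}-1+\frac{\lambda}{x+\lambda}\Big)+\frac{x}{1+\lambda} \nonumber
\ee
to convert the integrand into the target building block. Setting $d\mu(\lambda):=\lambda\,d\tilde\nu(\lambda)$, taking $b:=\beta$, and absorbing $\alpha x$ together with the affine remainder $x\int(1+\lambda)^{-1}d\tilde\nu(\lambda)$ into the single coefficient $a$, reproduces exactly the representation claimed in the theorem. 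The moment condition transfers because $\lambda(1+\lambda)^{-2}\leq(1+\lambda)^{-1}$, so $\int(1+\lambda)^{-2}d\mu=\int\lambda(1+\lambda)^{-2}d\tilde\nu<\infty$. Uniqueness of $a,b,\mu$ then follows from the uniqueness of $\alpha,\beta,\tilde\nu$ in Löwner's theorem through this explicit and invertible correspondence.

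The main obstacle is that the necessity direction rests on Löwner's integral representation for operator monotone functions, whose proof requires the analytic-continuation machinery of Pick/Nevanlinna functions on the upper half plane; taking that as known, the genuinely delicate points are fixing the precise normalization and convergence of the Löwner integrand and ensuring the affine pieces are absorbed consistently so that the stated moment condition holds. The reduction step --- that $f(x)/x$ is operator monotone exactly when $f$ is operator convex with $f(0)\leq 0$ --- is the other nontrivial input and would itself invoke the Hansen--Pedersen/Bendat--Sherman theory.
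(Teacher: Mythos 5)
First, a point of reference: the paper offers no proof of this theorem at all --- it is imported verbatim from Hiai \emph{et al.} with a citation --- so there is no in-paper argument to compare routes with, and your proposal has to be judged on its own terms. Your overall architecture (sufficiency by inspecting the building blocks; necessity via the Hansen--Pedersen reduction that $f$ is operator convex with $f(0)\le 0$ iff $f(x)/x$ is operator monotone on $(0,\infty)$, followed by a Löwner-type integral representation and an algebraic change of variables) is the standard and correct route, your sufficiency half is fine, and your identity $\frac{x^2}{x+\lambda}=\lambda\bigl(\frac{x}{1+\lambda}-1+\frac{\lambda}{x+\lambda}\bigr)+\frac{x}{1+\lambda}$ checks out.

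The gap is in the form of Löwner's theorem you invoke. The representation $g(x)=\alpha+\beta x+\int_0^\infty \frac{x}{x+\lambda}\,d\tilde\nu(\lambda)$ with $\int(1+\lambda)^{-1}d\tilde\nu<\infty$ characterizes operator monotone functions on the \emph{closed} half-line: since $\frac{x}{x+\lambda}\le \frac{2}{1+\lambda}$ for $x\le 1$, dominated convergence forces any such $g$ to have a finite limit as $x\to 0^+$. But $g(x)=(f(x)-f(0))/x$ is only operator monotone on the \emph{open} half-line and can diverge to $-\infty$ at the origin. Concretely, $f(x)=x\log x$ is continuous and operator convex on $[0,\infty)$ with $f(0)=0$, yet $g(x)=\log x$ admits no representation of your quoted form; likewise $f(x)=-\sqrt{x}$ gives $g(x)=-x^{-1/2}$. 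These are exactly the functions this paper cares about, so as written your necessity argument fails for them. The repair is to use the full Nevanlinna/Pick form $g(x)=\alpha+\beta x+\int_0^\infty\bigl(\frac{\lambda}{1+\lambda^2}-\frac{1}{x+\lambda}\bigr)d\nu(\lambda)$ with $\int(1+\lambda^2)^{-1}d\nu<\infty$; continuity of $f$ at $0$ only forces $\nu(\{0\})=0$. Multiplying by $x$, the integrand becomes $\frac{\lambda x}{1+\lambda^2}-1+\frac{\lambda}{x+\lambda}$, which differs from the target $\frac{x}{1+\lambda}-1+\frac{\lambda}{x+\lambda}$ by $x\cdot\frac{\lambda-1}{(1+\lambda)(1+\lambda^2)}$, a $\nu$-integrable multiple of $x$ that is absorbed into $a$; one then takes $\mu=\nu$ outright (no reweighting by $\lambda$), and $\int(1+\lambda)^{-2}d\mu<\infty$ is equivalent to $\int(1+\lambda^2)^{-1}d\nu<\infty$ because $1+\lambda^2\le(1+\lambda)^2\le 2(1+\lambda^2)$. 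So your plan survives, but only after replacing the representation you quoted; the uniqueness claim should likewise be routed through the corrected correspondence.
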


The existence of the canonical form for operator convex functions is the main motivation behind the strengthening of AH inequality. Our key lemma is the following:
\begin{lem}
For $A,B>0$,
\begin{equation}
\frac{1}{2}(\frac{1}{A} + \frac{1}{B}) - \frac{2}{A+B} \geq 2 \frac{1}{A+B}(A-B)\frac{1}{A+B}(A-B)\frac{1}{A+B}\label{eq:StrongerAH}
\end{equation}
\end{lem}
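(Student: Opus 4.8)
The plan is to exploit the fact that operator inequalities are preserved under congruence: for any invertible $C$ one has $X \geq Y$ if and only if $CXC^* \geq CYC^*$. I would conjugate the entire inequality in Eq.~\ref{eq:StrongerAH} by $C = (A+B)^{1/2}$, which is invertible and self-adjoint since $A+B>0$. The virtue of this particular choice is that it simplifies all three terms at once and, crucially, collapses the two noncommuting variables $A$ and $B$ into a single operator.

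Concretely, set $\tilde A = (A+B)^{-1/2} A (A+B)^{-1/2}$ and $\tilde B = (A+B)^{-1/2} B (A+B)^{-1/2}$, so that $\tilde A + \tilde B = I$ and $0 < \tilde A < I$ (these bounds follow from $A>0$ and $B>0$). Since $(A+B)^{1/2} A^{-1} (A+B)^{1/2} = \tilde A^{-1}$ and likewise for $B$, conjugating the left-hand side yields $\tfrac{1}{2}(\tilde A^{-1} + \tilde B^{-1}) - 2I$, while conjugating the right-hand side yields $2\tilde D^2$ with $\tilde D = (A+B)^{-1/2}(A-B)(A+B)^{-1/2} = \tilde A - \tilde B$. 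Writing $X = \tilde A$, so that $\tilde B = I - X$ and $\tilde D = 2X - I$, the claim becomes the equivalent statement
\begin{equation}
\frac{1}{2}\left(X^{-1} + (I-X)^{-1}\right) - 2I \geq 2(2X-I)^2, \quad 0 < X < I, \nonumber
\end{equation}
which involves only the single operator $X$.

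Because every term is now a function of $X$ alone, the operators commute and share an eigenbasis, so the inequality reduces to the scalar statement $\tfrac{1}{2}\!\left(x^{-1} + (1-x)^{-1}\right) - 2 \geq 2(2x-1)^2$ for each eigenvalue $x \in (0,1)$ of $X$. I would verify this by the substitution $x = \tfrac{1}{2} + t$ with $t \in (-\tfrac12, \tfrac12)$, which turns the left-hand side into $\tfrac{8t^2}{1-4t^2}$ and the right-hand side into $8t^2$; the inequality then amounts to $\tfrac{1}{1-4t^2} \geq 1$, which holds because $1 - 4t^2 \in (0,1]$, with equality exactly at $t=0$ (i.e.\ $A = B$).

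I expect the only genuinely nontrivial step to be spotting the congruence by $(A+B)^{1/2}$: it is what replaces a two-variable, noncommutative problem by a one-variable, commutative one. Once the normalization $\tilde A + \tilde B = I$ is in force, nothing survives that sees the noncommutativity of $A$ and $B$, and the remainder is the elementary scalar estimate above.
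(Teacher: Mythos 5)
Your proof is correct and follows essentially the same strategy as the paper's: a congruence that reduces the two-variable noncommutative inequality to functions of a single commuting variable, followed by an elementary scalar estimate. The paper conjugates by $A^{1/2}$ and works with $C=A^{-1/2}BA^{-1/2}$, reducing to $(1+C)^2\geq 4C$; your choice of $(A+B)^{1/2}$ is an equivalent (and arguably more symmetric) normalization of the same idea.
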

\begin{proof}
Define $C=A^{-\frac{1}{2}}BA^{-\frac{1}{2}}$. Applying a left and right multiplication of $A^{\frac{1}{2}}$ on both sides of Eq.\ref{eq:StrongerAH}, the left hand side can be expressed as $\frac{(1-C)^2}{2C(1+C)}$, while the right hand side can be expressed as $2\frac{(1-C)^2}{(1+C)^3}$. Using the fact that $(1+C)^2 \geq 4C$, one can establish the inequality. 
\end{proof}

\section{Proof of the main result\label{section:proof}}
A well-known approach for proving operator Jensen inequality involves (i) proving the inequality at the midpoint and (ii) making a judicious choice of matrices in an enlarged Hilbert space.\cite{Hansen1982} We shall follow a similar approach. Under elementary manipulations, one can show that Theorem \ref{thm:mainresult} for a general operator convex function follows by proving it for a special family of functions, namely $f_{\lambda}(x) = \frac{1}{\lambda+x}$. Without loss of generality, we shall prove Theorem \ref{thm:mainresult} for $f(x)=\frac{1}{\lambda+x}$. The case for the linear and quadratic terms are trivial, so we omit the proof for them.

First, we consider the $c=\frac{1}{2}$ case of Theorem \ref{thm:mainresult}.
\begin{align}
\frac{f(A) + f(B)}{2} - f(\frac{A+B}{2})  &=\int^{\infty}_{0} (\frac{1}{2}(\frac{1}{A+\lambda} + \frac{1}{B+\lambda}) - \frac{2}{A+B + 2\lambda}) d\mu(\lambda) \nonumber\\
& \geq 2\int^{\infty}_{0} \frac{1}{A+B + 2\lambda}(A-B) \frac{1}{A+B + 2\lambda}(A-B)\frac{1}{A+B + 2\lambda} d\mu(\lambda) \nonumber \\
&=\frac{1}{4} \int^{\infty}_{0}  \frac{1}{\frac{A+B}{2} + \lambda}(A-B) \frac{1}{\frac{A+B}{2} + \lambda}(A-B)\frac{1}{\frac{A+B}{2} + \lambda} d\mu(\lambda) \nonumber \\
&= \frac{1}{8} \int^{\infty}_{0} \frac{d^2}{dx^2} \frac{1}{\frac{A+B}{2} + x(A-B) + \lambda} |_{x=0} d\mu(\lambda)  \nonumber \\
&= \frac{1}{8} \frac{d^2}{dx^2}f(\frac{A+B}{2} + x(A-B))|_{x=0}
\end{align}

Away from the midpoint, we use the following choice of operators:\cite{Bhatia1997}
\begin{equation}
W=
\begin{pmatrix}
 c^{\frac{1}{2}}I & -(1-c)^{\frac{1}{2}} \\
 (1-c)^{\frac{1}{2}}I & c^{\frac{1}{2}}I
\end{pmatrix}, \nonumber
\end{equation}
and
\begin{equation}
T=
\begin{pmatrix}
A & 0 \\
0 & B
\end{pmatrix}. \nonumber
\end{equation}
Each of the entries in the matrices correspond to a block of square matrices of the dimension. Setting $T_1 = WTW^{\dagger}$, $T_2= W^{\dagger}TW$, and applying it to the midpoint convexity result, one can obtain the desired result. More precisely, the convexity at the midpoint is the following:
\begin{align}
\frac{f(T_1) + f(T_2)}{2} - f(\frac{T_1+T_2}{2}) =
\begin{pmatrix}
c f(A) + (1-c) f(B)  &0 \\
0 & c f(B) + (1-c)f(A)
\end{pmatrix}
-
f
\begin{pmatrix}
c A + (1-c) B & 0 \\
0 & c B + (1-c) A
\end{pmatrix}. \nonumber
\end{align}
One can also easily check the following facts:
\begin{align}
\frac{T_1+ T_2}{2} =
\begin{pmatrix}
c A + (1-c) B  &0 \\
0 & c B + (1-c)A
\end{pmatrix}. \nonumber
\end{align}
\begin{align}
\frac{T_1-T_2}{2} =
\begin{pmatrix}
0 & \sqrt{c(1-c)}(A-B) \\
\sqrt{c(1-c)}(A-B) & 0
\end{pmatrix}. \nonumber
\end{align}
By taking one of the blocks,
\begin{align}
\mathcal{C}_f^{c}(A,B) &\geq c(1-c) \int^{\infty}_{0}  \frac{1}{M(c) + \lambda} (A-B) \frac{1}{M(1-c) + \lambda} (A-B) \frac{1}{M(c) + \lambda} d\mu(\lambda) \nonumber \\
&= c(1-c) \frac{1}{(2c-1)^2} \int^{\infty}_{0}  \frac{1}{M(c) + \lambda} (M(c)-M(1-c)) \frac{1}{M(1-c) + \lambda} (M(c)-M(1-c)) \frac{1}{M(c) + \lambda} d\mu(\lambda).
\end{align}
One can check that
\begin{equation}
\mathcal{D}_f(A,B) = \int^{\infty}_{0}  \frac{1}{B+\lambda} (A-B) \frac{1}{A+\lambda} (A-B) \frac{1}{B+\lambda} d\mu(\lambda),
\end{equation}
completing the proof.

\section{Application to the von Neumann entropy}
Now we discuss an application of Theorem \ref{thm:mainresult}.
\begin{corollary}
For density matrices $\rho, \sigma$ on a finite-dimensional Hilbert space,
\begin{equation}
S(c\rho  + (1-c)\sigma) - cS(\rho) - (1-c)S(\sigma) \geq \frac{1}{2}c(1-c) \|\rho - \sigma\|_1^2
\end{equation}
\end{corollary}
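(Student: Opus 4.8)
The plan is to recognize the left-hand side as the trace of a modulus of convexity and then invoke Theorem~\ref{thm:mainresult} together with the ordinary Pinsker inequality. The crucial observation is that $f(x)=x\log x$ is operator convex on $[0,\infty)$, and since $S(\rho)=-\Tr f(\rho)$ and the trace is linear,
\begin{equation}
S(c\rho + (1-c)\sigma) - cS(\rho) - (1-c)S(\sigma) = \Tr\, \mathcal{C}_f^{c}(\rho,\sigma). \nonumber
\end{equation}
Thus the corollary reduces to a scalar lower bound on $\Tr\, \mathcal{C}_f^{c}(\rho,\sigma)$, which Theorem~\ref{thm:mainresult} is tailored to supply.

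First I would apply Theorem~\ref{thm:mainresult} with $A=\rho$, $B=\sigma$ and take the trace of both sides, obtaining for $c\neq\tfrac12$
\begin{equation}
\Tr\, \mathcal{C}_f^{c}(\rho,\sigma) \geq \frac{c(1-c)}{(1-2c)^2}\,\Tr\, \mathcal{D}_f(M(1-c),M(c)). \nonumber
\end{equation}
Next I would evaluate the trace of the matrix Bregman divergence. Using $\frac{d}{dt}\Tr f(X(t)) = \Tr(f'(X(t))\dot X(t))$ and $f'(x)=\log x + 1$, the directional-derivative term collapses to $\Tr(f'(B)(A-B))$, so that $\Tr\, \mathcal{D}_f(A,B) = D(A\|B) - \Tr A + \Tr B$. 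Because $M(c)$ and $M(1-c)$ are both density matrices of unit trace, the correction terms cancel and $\Tr\, \mathcal{D}_f(M(1-c),M(c)) = D(M(1-c)\|M(c))$, an honest relative entropy.

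Then I would invoke Pinsker's inequality and exploit the identity $M(1-c)-M(c) = (1-2c)(\rho-\sigma)$, whence $\|M(1-c)-M(c)\|_1^2 = (1-2c)^2\|\rho-\sigma\|_1^2$. This gives $D(M(1-c)\|M(c)) \geq \tfrac12 (1-2c)^2\|\rho-\sigma\|_1^2$, and the factors $(1-2c)^2$ cancel against the denominator above, leaving exactly $\tfrac12 c(1-c)\|\rho-\sigma\|_1^2$. The midpoint case $c=\tfrac12$ then follows by continuity of both sides in $c$, and the endpoints $c\in\{0,1\}$ are trivial.

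The main obstacle I anticipate is technical rather than conceptual: Theorem~\ref{thm:mainresult} is stated for strictly positive $A,B$, whereas density matrices may be rank-deficient. I would handle this by noting that for $c\in(0,1)$ the supports of $M(c)$ and $M(1-c)$ coincide, each being $\mathrm{supp}\,\rho + \mathrm{supp}\,\sigma$, so the relative entropy remains finite; restricting to this common support renders the relevant operators positive definite, and a standard approximation argument (perturbing $\rho,\sigma$ to full rank and passing to the limit, using continuity of the entropy and the trace norm in finite dimension) extends the bound to the general case.
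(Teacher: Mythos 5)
Your proposal is correct and follows essentially the same route as the paper: apply Theorem~\ref{thm:mainresult} to $f(x)=x\log x$, take the trace to turn the Bregman divergence into the relative entropy $D(M(1-c)\|M(c))$, invoke Pinsker's inequality together with $M(1-c)-M(c)=(1-2c)(\rho-\sigma)$ so the $(1-2c)^2$ factors cancel, and recover $c=\tfrac12$ by continuity (the paper makes this quantitative via Fannes' inequality). Your added care about rank-deficient density matrices and the explicit verification that $\Tr\,\mathcal{D}_f(A,B)=D(A\|B)$ for unit-trace arguments are welcome refinements of details the paper leaves implicit, but they do not change the argument.
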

\begin{proof}
For $f(x) = x\log x$, Petz showed that
\begin{equation}
\Tr(\mathcal{D}_f(A,B)) = D(A\|B),
\end{equation}
where $D(A \|B) = \Tr(A (\log A - \log B))$ is the relative entropy between $A$ and $B$.\cite{Petz2007} Hence, the following inequality immediately follows.
\begin{equation}
S(c\rho  + (1-c)\sigma) - cS(\rho) - (1-c)S(\sigma) \geq c(1-c) \frac{1}{(1-2c)^2}D( c\sigma + (1-c)\rho\|  c\rho + (1-c)\sigma)
\end{equation}
for $c\neq \frac{1}{2}$. Applying Pinsker's inequality,
\begin{equation}
S(c\rho  + (1-c)\sigma) - cS(\rho) - (1-c)S(\sigma) \geq \frac{1}{2}c(1-c) \|\rho - \sigma\|_1^2. \label{eq:strong_concavity_entropy}
\end{equation}
Eq.\ref{eq:strong_concavity_entropy} should be true for $c=\frac{1}{2}$ as well by some continuity argument, which is discussed below.

Recall that Fannes' inequality asserts that
\begin{equation}
|S(\rho) - S(\sigma)| \leq \epsilon \log d - \epsilon \log \epsilon,
\end{equation}
where $\epsilon = \|\rho - \sigma\|_1$ and $d$ is the dimension of the Hilbert space.\cite{Fannes1973} Define $\mathcal{S}(c)$ as
\begin{equation}
\mathcal{S}(c) = S(c\rho + (1-c)\sigma) - cS(\rho) - (1-c)S(\sigma). \nonumber
\end{equation}
Using Fannes' inequality,
\begin{equation}
|\mathcal{S}(\frac{1}{2}) - \mathcal{S}(\frac{1}{2} + \delta)| \leq \epsilon \delta(\log d - \log \epsilon \delta)+ 2\delta \log d. \label{eq:convexity_difference}
\end{equation}
Denoting the right hand side of Eq.\ref{eq:convexity_difference} as $\Delta(\delta, \epsilon, d)\geq 0$,
\begin{align}
\mathcal{S}(\frac{1}{2}) & \geq \mathcal{S}(\frac{1}{2}+\delta) - \Delta(\delta, \epsilon, d) \nonumber\\
& \geq \frac{1}{8} \|\rho-\sigma\|_1^2 - \frac{1}{2}\delta^2\|\rho-\sigma\|_1^2 - \Delta(\delta, \epsilon, d). \nonumber
\end{align}
Taking the $\delta \to 0$ limit, we obtain:
\begin{equation}
\mathcal{S}(\frac{1}{2}) \geq \frac{1}{8}\|\rho - \sigma\|_1^2 \nonumber
\end{equation}
\end{proof}

\section{Discussion}
We have obtained a lower bound for the modulus of convexity for operator convex functions, which can be expressed in terms of the matrix Bregman divergence. We also gave a simple argument that the inequality cannot be extended to general convex functions. For the operator convex function $f(x)= x\log x$, the trace of the matrix Bregman divergence reduces to quantum relative entropy. In this case, the inequality reduces to the \emph{strict concavity} of Von Neumann entropy. It will be interesting to find an application of this inequality. Another important question is to find a strengthening of the operator Jensen inequality. Since many of the nontrivial results in quantum information theory can be essentially derived from the operator Jensen inequality, its strengthening will be undoubtedly useful in many contexts.

\begin{acknowledgements}
I would like to thank Andreas Winter and Alexei Kitaev for many helpful discussions which motivated this work. I would also like to thank Jon Tyson, Mary Beth Ruskai, Fernando Brand\~ao for helpful discussions. I would also like to thank Lin Zhang for pointing out an error in the original manuscript. Lastly, I thank the anonymous referee who suggested to investigate whether the main result holds for operator convex function of finite order. This research was supported in part by NSF under Grant No. PHY-0803371, by ARO Grant No. W911NF-09-1-0442, and DOE Grant No. DE-FG03-92-ER40701. Research at Perimeter Institute is supported by the Government of Canada through
Industry Canada and by the Province of Ontario through the Ministry of Economic Development and Innovation.
\end{acknowledgements}

\end{document}